\tikzstyle{block} = [draw, fill=white, rectangle,
\tikzstyle{input} = [coordinate]
\tikzstyle{output} = [coordinate]
\newcommand{\E}{ \mathbb{E} } 
\let\leq\leqslant
\let\geq\geqslant
\let\phi\varphi
\newcommand{\Eps}{\mathcal{E}}
\newcommand{\X}{\mathcal{X}}
\DeclareMathOperator{\Tr}{Tr}
\newcommand{\Y}{\mathcal{Y}}
\newcommand{\Ne}{\mathcal{N}}
\newcommand{\Hil}{\mathcal{H}}
\newcommand{\ket}[1]{\left|#1\right\rangle}
\newcommand{\bra}[1]{\left\langle#1\right|}
\newcommand{\ketbra}[2]{\ket{#1}\bra{#2}}
\newtheorem{theorem}{Theorem}
\newtheorem{lemma}{Lemma}
\newtheorem{corollary}{Corollary}
\newtheorem{definition}{Definition}
\newtheorem{remark}{Remark}
\begin{document}



\title{Data Processing Inequalities\\for the Quantum Guesswork}

 \author{
	   \IEEEauthorblockN{Ilyass Mejdoub, Julien Béguinot, and Olivier Rioul
	   \IEEEauthorblockA{LTCI, Télécom Paris, Institut Polytechnique de Paris, France\\
						{firstname.lastname@telecom-paris.fr}}}
	 }

%

\maketitle

\begin{abstract} 
\emph{Quantum guesswork} has emerged as a crucial measure in assessing the distinguishability of non-orthogonal quantum states. In this paper, we review and compare previous definitions of the quantum guesswork, and establish the pre and post data processing inequalities (DPIs) that are relevant for the guessing problem. We also derive improved lower and upper bounds on the quantum guesswork in terms of the Shannon entropy and Holevo's information.
\end{abstract}



\maketitle

\section{Introduction}\label{sec:intro}

The classical guessing game is as follows. Alice has access to the realization $X=x$ of a discrete random variable $X\in\X = \{x_1, ..., x_n\}$, and Bob asks questions of the form ``is $X\!=\!x_i$?'' to Alice until the answer is ``yes''.
Both Alice and Bob know the alphabet $\X$ and the probability distribution $p_X$ of~$X$. 
Bob's \emph{guessing strategy} specifies the order of the successive guesses he makes until Alice confirms that Bob has found the correct value. 
A sound strategy never asks twice for the same value of $x$; therefore, Bob's strategy  can be seen as 
a permutation $\sigma\in\mathcal{S}_n$ of the alphabet of cardinality $n$, where $\sigma(1)$ is the first guess, $\sigma(2)$ is the second, and so on until the secret $x$ is found in at most $n$ trials.

The \emph{guesswork} is the average number of Bob's guesses under his strategy $\sigma$, given by 
\begin{equation}
    \mathcal{G}_\sigma(X) \triangleq {\sum_{i=1}^n}\, i \cdot p_X(\sigma(i)). 
\end{equation}
This quantity is minimized for Bob's optimal strategy $\sigma^*$, such that the probabilities $p_X(\sigma^*(i))$ are in descending order, yielding a so-called \emph{guessing entropy}, introduced by Massey~\cite{massey1994guessing}:
\begin{equation}
    G(X) = \min_{\sigma\in\mathcal{S}_n} \mathcal{G}_\sigma(X) =\mathcal{G}_{\sigma^*}(X)= \smash[b]{\sum_{i=1}^{n}}\,  i \cdot p^{\downarrow}_X(x_i) 
\end{equation}
where $p^{\downarrow}_X(x)$ are the probabilities $p_X(x)$ sorted in descending order.

Should Bob gain supplementary side information correlated with $X$, his resulting guesswork is, on average, lower. In the presence of some side information $Y$ (taking values in an arbitrary alphabet $\mathcal{Y}$), Bob's strategy depends on his observed value~$Y=y$:
\begin{equation}
\sigma: y \in \mathcal{Y} \mapsto \sigma_y \in \mathcal{S}_n. 
\end{equation}
and the corresponding guesswork given observation $Y=y$ is
\begin{equation}
\mathcal{G}_{\sigma}(X | Y = y) = \sum_{i=1}^n\, i \cdot p_{X|Y}(\sigma_y(i)|y). 
\end{equation}
Bob's \emph{conditional guesswork} is the average guesswork over all possible observations: 
\begin{equation}
\mathcal{G}_{\sigma}(X | Y) = \E_Y \sum_{i=1}^n\, i\cdot  p_{X|Y}(\sigma_Y(i)|Y). 
\end{equation}
Now for a given strategy, all observations $y\in\mathcal{Y}$ such that $\sigma_y=\sigma$ yield the same permutation $\sigma$, and regrouping terms in the sum gives
\begin{equation}\label{eq:classicalguess}
\begin{aligned}
\mathcal{G}_{\sigma}(X | Y) &= \sum_{y\in\mathcal{Y}} \sum_{i=1}^n\, i\cdot  p_{X,Y}(\sigma_y(i),y)\\
&=\sum_{\sigma\in\mathcal{S}_n} \sum_{i=1}^n\, i\cdot  p_{X,\sigma_Y}(\sigma(i),\sigma)
\\&=
\mathcal{G}_{\sigma}(X | \sigma_Y).
\end{aligned}
\end{equation}
This means that for a fixed strategy $\sigma: y \in \mathcal{Y} \mapsto \sigma_y \in \mathcal{S}_n$,
the conditional guesswork depends on the observation $Y$ \emph{only through $\sigma_Y$}. 
%
%
As shown by Arikan~\cite{arikan1996inequality} in 1996, the optimal strategy $\sigma^*_y$ still consists in guessing in descending order of conditional probabilities $p_{X|Y}(x|y)$, and the  \emph{conditional guessing entropy} is the average guesswork for such  an optimal strategy $\sigma^*$:
\begin{equation}
G(X|Y) \!=\! \min_\sigma \mathcal{G}_\sigma(X|Y) \!=\!\mathcal{G}_{\sigma^*}(X|Y) \!=\! \E_Y\! \sum_{i=1}^{n}\,i\cdot p^{\downarrow}_{X|Y}(x_i|y) 
\end{equation}
where $p^{\downarrow}_{X|Y}(x|y)$ are the probabilities $p_{X|Y}(x|y)$ sorted in descending order.

If we now substitute the classical random variable $Y$ with quantum side information, we get the notion of \emph{quantum guesswork} introduced by Chen \textit{et al.}~\cite{chen2014minimum} as an optimization criterion  for discriminating non-orthogonal quantum states. This constitutes an alternative to the conventional error probability criterion~\cite{helstrom1969quantum, chefles2000quantum, barnett2009quantum,davies1978information,eldar2001quantum, barnett2001minimum}.
Furthermore, as
mentioned in~\cite[\S~IX]{hanson2021guesswork}, quantum guesswork can serve
as a security criterion in some quantum information processing
tasks such as certifying the safety of an exchanged key in an
imperfect implementation of a QKD protocol.

In this quantum guessing game,
Alice still has a random variable $X$ whose probability distribution $p_X$ and alphabet $\X$ are known to both parties. Now, when Alice observes the realization $X=x$, she sends Bob a quantum state $\rho_x$. 
The quantum ensemble 
\begin{equation}
\Eps \triangleq \{ (\rho_x, p_X(x))\}_{x \in \X} 
\end{equation}
 is known to both Alice and Bob. 
 The quantum states $\{\rho_x\}_{x \in \X}$ can be either pure or mixed states and  belong to the set of density matrices in some finite dimensional Hilbert space $\Hil$. 

Once again, Bob can only ask questions of the form ``is $X\!=\!x$?''. 
The only way for Bob to get any helpful information from the received state $\rho_x$ is to eventually measure it. Bob can freely manipulate the state $\rho_x$, including applying unitary transformations or conducting measurements, to correctly infer the realization~$x$.  
The most general form of quantum measurement~\cite{wilde2013quantum,watrous2018theory} is called POVM (positive operator valued measure) and is mathematically defined as a set $ E = \{E_y\}_{y\in \Y}$ of positive semi-definite operators which sum to the identity:
\begin{equation}
    E_y \geq 0 \text{ and } \sum_{y\in \Y} E_y = I.
\end{equation}
The POVM and the quantum ensemble used by Bob define a classical random variable $Y_E$ taking values in $\Y$.
By Born's rule~\cite{wilde2013quantum,watrous2018theory}:
\begin{equation}
p_{Y_E|X}(y|x)=  \Tr[E_y\rho_x].
\end{equation}
so that 
$p_{Y_E}(y) =
\sum_{x \in \X}  \Tr[E_y\rho_x]\cdot p_X(x)$.
While the channel $X \to \rho_X$ is quantum, for a fixed POVM $E$ the composition of channels 
$$
    X \to \boxed{\text{Quantum Channel}} \to \rho_X \to \boxed{\mathrm{POVM}} \to Y_E
$$
forms a classical channel. Hence, subject to the POVM $E$, we can define the quantum conditional guesswork for a given strategy $\sigma$ as a classical conditional guesswork~\eqref{eq:classicalguess}  
\begin{equation}
\mathcal{G}_{\sigma}(X | Y_{E}) 
=\mathcal{G}_{\sigma}(X | \sigma_{Y_E}) 
\end{equation}
The guesswork depends on $\rho_x$ only through $Y_E$, and as above, it depends on $Y_E$ only through $\sigma_{Y_E}$, according to the following Markov chain 
$$
    X \to \rho_X \to \boxed{\mathrm{POVM}} \to Y_E \to  \sigma_{Y_E}.
$$
The quantum guessing entropy is now the minimum average guesswork, achieved for the optimal guess strategy \emph{and} the optimal \emph{measurement} strategy (through the choice of the POVM). Therefore, a natural definition is the following:
\begin{definition}[Quantum Guesswork]\label{def:one}
The \emph{quantum conditional guessing entropy} of $X$ (also named \emph{quantum guesswork}) given a quantum ensemble $\mathcal{E}$ is defined by
 $$
    G(X|\mathcal{E}) = \min_{\sigma,E} \mathcal{G}_{\sigma}(X | Y_{E})
$$
\end{definition}
The optimization is over the pair $(\sigma,E)$; however, it only depends on $\sigma_{Y_E}$ so that the procedure can be simplified.

%
In the present literature on quantum guessing, the quantum guesswork $G(X|\Eps)$ was defined in two ways, either by minimizing a classical guessing entropy over the choice of a POVM~$E$~\cite{chen2014minimum,hanson2021guesswork,katariya2023experimental}, or through it ``operational'' point of view advocated in~\cite{dall2022guesswork,avirmed2023adversarial,dall2023measurement,dall2023computing}. 
The equivalence between these two definitions of quantum guesswork is perhaps  known in certain circles but, to the best of our knowledge, has not yet been explicitly proved.


The remainder of this paper is organized as follows.
In Section \ref{Definitions}, we restate the two previous definitions of quantum guesswork and formally show that they are equivalent to our Definition~\ref{def:one}. In addition, we review and explain anew known results about the measurement attaining the quantum guesswork. The main results of this paper are in Section \ref{Main Results}. We first establish the unitary invariance  property (\S~\ref{Unitary invariance}).  We then generalize two data processing inequalities to the quantum setting (pre-DPI in \S~\ref{Pre DPI} and post-DPI in \S~\ref{Post DPI}). Finally, in  \S~\ref{Sharpening}, we propose a general methodology to obtain lower and upper bounds on quantum guesswork in terms of the Shannon entropy and Holevo's information that improves the bound of Chen \textit{et al.}~\cite{chen2014minimum}.

\section{Definitions}
\label{Definitions}


\subsection{Notations}
\begin{itemize}
    \item $\Hil$ denotes an arbitrary finite dimensional Hilbert space;
    \item 
    $\mathcal{L}(\Hil)$ denotes the space of linear operators from $\Hil$ to $\Hil$; 
    \item ${\cal S}_n$ denotes the set of permutations on $\{1,...,n\}$;
    \item For a permutation $\sigma \in {\cal S}_n$, $\overline{\sigma}(i) \triangleq \sigma(n+1-i)$;
    \item For a normal matrix $A\in \mathcal{L}(\Hil)$ whose spectral decomposition is $A = \sum_{k=1}^d \lambda_k \ketbra{\psi_k}{\psi_k}$, we use the notation $f(A) \triangleq \sum_{k=1}^d f(\lambda_k) \ketbra{\psi_k}{\psi_k}$. In particular, 
\begin{equation}
\begin{aligned}
    |A| &\triangleq \sum_{k=1}^d |\lambda_k| \ketbra{\psi_k}{\psi_k},\\
    \mathbbm{1}_{A\in S} &\triangleq \sum_{k;\lambda_k\in S}  \ketbra{\psi_k}{\psi_k}.
\end{aligned} 
\end{equation}
    \item \emph{Loewner order}: For Hermitian matrices $A,B$, we write $A\geq B$ if $A-B$ is positive;
    \item $||.||$ denotes the trace norm $||A|| \triangleq \Tr(|A|)$;
\end{itemize}

\subsection{Quantum Guesswork: Chen et al.'s Definition $G_1(X|\Eps)$}

The first previous definition of the quantum guesswork $G_1(X|\Eps)$ given in~\cite{chen2014minimum,hanson2021guesswork,katariya2023experimental} is as follows.
From the joint distribution $p_X(x)p_{Y_E|X}(y|x)$ of $X$ and $Y_E$, we compute the (classical) \emph{guessing entropy} $G(X|Y_E)$ from the POVM $E$ and the ensemble $\Eps$. 
Bob can now only control the choice of the POVM to minimize his average number of guesses. Thus the quantum guesswork is here defined as 
\begin{equation}
    G_1(X|\Eps) \triangleq \inf_{E}G(X|Y_E).
\end{equation}

As shown in \cite[Theorem~1]{hanson2021guesswork}, the most general quantum measurement strategy Bob can use is either one of the following strategies:
\begin{itemize}
\item perform a single measurement using an arbitrary POVM and then apply the classical guessing strategy;
\item perform a single measurement using a POVM indexed by permutations $\sigma\in {\cal S}_n$ specifying the order of the successive guesses;
\item perform successive measurements using POVMs indexed by the remaining possible values using Alice's feedback to each of Bob's guesses.
\end{itemize}

\subsection{Quantum\,Guesswork: Dall’Arno\,et\,al.'s\,Definition\,$G_2(X|\Eps)$}

The second previous definition of the quantum guesswork $G_2(X|\Eps)$ given in~\cite{dall2022guesswork,avirmed2023adversarial,dall2023measurement,dall2023computing} is as follows.
Without loss of generality, we can assume $\X = \{1, ..., n\}$, and we restrict ourselves to special types of POVMs ${E}$ indexed by \emph{permutations}:
\begin{equation}
 {E}= \{{E}_{\sigma}\}_{\sigma \in {\cal S}_n}
\end{equation}
whose outcome is \emph{operational} in the sense that it gives Bob directly its entire ranking of guesses in the form of a permutation $\sigma \in {\cal S}_n$ where $\sigma(i)$ is Bob's $i$-th guess. 

The probability that the outcome of Bob's measurement is $\sigma$ and that his $i$-th guess is the correct one (i.e $\sigma(i) = x$) given that he received the state $\rho_x$ is obtained by Born's rule
\begin{equation}
    p(\sigma, i) = \Tr[E_{\sigma}\cdot \rho_{\sigma(i)}]\cdot p_X(\sigma(i)).
\end{equation}
If we marginalize over $\sigma$ we get the probability that the $i$-th guess is correct given that the received state is $\rho_x$,
\begin{equation}
    p(i) = \sum_{\sigma \in {\cal S}_n}\Tr[E_{\sigma}\cdot \rho_{\sigma(i)}]\cdot p_X(\sigma(i)).
\end{equation}
Bob's guesswork for ensemble $\Eps$ and POVM $E$ is now 
\begin{equation}
\mathcal{G}(\Eps, E)\triangleq \sum_{i=1}^{n} i\cdot p(i),
\end{equation}
where the values $p(i)$ are not necessarily in decreasing order, and the quantum guesswork is defined as its minimum  over all POVMs ${E} = \{{E}_{\sigma}\}_{\sigma \in {\cal S}_n}$:
\begin{equation}
 G_2(X|\Eps) \triangleq \inf_{{E}} \mathcal{G}(\Eps, E). 
\end{equation}

\subsection{Equivalence Between Definitions $G_1(X|\Eps)$ and $G_2(X|\Eps)$ }

\begin{theorem}
Definitions of Chen \emph{et al.} and Dall’Arno \emph{et al.} are equivalent to our Definition~\ref{def:one}:
\begin{equation}
    G_1(X|\Eps) = G_2(X|\Eps) =G(X|\Eps).
\end{equation}
\end{theorem}

\begin{proof}
By definition of conditional guessing entropy,
\begin{equation}
G(X|Y_E)=\min_\sigma \mathcal{G}_{\sigma}(X|{Y_E}) \!=\!\mathcal{G}_{\sigma^*}(X|{Y_E}) 
\end{equation}
where the optimal strategy $\sigma^*$ is such that for each $y$, the probabilities $p_{X|Y_E}(\sigma^*_Y(i)|y)$ are sorted in descending order.
It follows that
\begin{equation}
    G(X|\mathcal{E})  = \min_E\min_{\sigma} \mathcal{G}_{\sigma}(X | {Y_E}) 
    = \min_E G(X|Y_E)=G_1(X|\mathcal{E}).
\end{equation}
Now let $\sigma: y\in\mathcal{Y}\mapsto \sigma_y$ be Bob's (not necessarily optimal) strategy and recall the Markov chain
$$
 \rho_x \to \boxed{\mathrm{POVM}} \to Y_E \to \sigma_{Y_E}
$$
and the definition
\begin{equation}
\mathcal{G}_\sigma (X|Y_E) =\sum_y \sum_i\, i\cdot p_{X,Y_E}(\sigma_y(i),y)
\end{equation}
where by Born's rule,
\begin{equation}
 p_{X,Y_E}(x,y)=p_X(x)\cdot \Tr[E_y\rho_x].
\end{equation}
Again grouping terms in $y$ corresponding to the same $\sigma_y=\sigma$, we have
\begin{equation}
  p_{X,\sigma_{Y_E}}(x,\sigma)=p_X(x) \sum_{y\mid \sigma_y=\sigma}\Tr[E_y\rho_x] = 
  p_X(x) \Tr[\tilde{E}_\sigma\rho_x]
\end{equation}
where
\begin{equation}
    \tilde{E}_\sigma = \sum_{y\mid \sigma_y=\sigma } E_y. 
\end{equation}
is another POVM $\widetilde{E}$, indexed by permutations $\sigma\in\mathcal{S}_n$, since 
$\tilde{E}_\sigma\geq 0$ and $\sum_\sigma \tilde{E}_\sigma=I$.
The output of a measurement is now a realization of a random permutation $\Sigma\triangleq Y_{\tilde{E}}\in\mathcal{S}_n$, which directly gives Bob's strategy, where
\begin{equation}
\begin{aligned}
\mathcal{G}_\sigma (X|Y_E)&=\mathcal{G}_{\mathrm{Id}} (X|\Sigma) 
\\&=\sum_\sigma \sum_i\, i\cdot p_{X,\Sigma}(\sigma(i),\sigma)\\
&= \sum_\sigma \sum_i\, i\cdot p(i,\sigma)= \sum_i\, i\cdot p(i)
\end{aligned}
\end{equation}
where $\mathrm{Id}$ is the identity permutation since by construction, $\Sigma=\sigma\iff \sigma_{y_{\tilde{E}}}=\sigma$.
Thus, minimizing over $\tilde{E}$ is the same as minimizing over both $E$ and $\sigma$.
Therefore,
\begin{equation}
G(X|\mathcal{E})  = \min_{E}\min_\sigma \mathcal{G}_{\sigma}(X | Y_E) = \min_{\tilde{E}}\mathcal{G}_{\mathrm{Id}} (X|\Sigma)  = G_2(X|\mathcal{E})
\end{equation}
which ends the proof.
\end{proof}

\subsection{Remarks on Attaining the Minimum}
\label{attaining the min}

Dall'Arno et al.~\cite{dall2022guesswork} found a closed form expression $G(X|\Eps)$ under some conditions on the quantum ensemble $\Eps$, using his definition $G_2(X|\Eps)$ above. 
For convenience write $E$ instead of $\tilde{E}$, and let
\begin{equation}
  \mathcal{G}(\Eps,E) \triangleq \mathcal{G}_{\mathrm{Id}} (X|\Sigma)
\end{equation}
to stress the dependence on the quantum ensemble $\Eps$ and the POVM $E$.
With that definition, the objective function to be minimized over the space of POVMs is
\begin{align}
  \mathcal{G}(\Eps,E) &= \sum_{i=1}^n i\cdot \sum_{\sigma \in {\cal S}_n}\Tr[E_{\sigma}\cdot \rho_{\sigma(i)}]\cdot p_X(\sigma(i)) \\
    &= \sum_{\sigma} \Tr[E_\sigma \cdot f_{\sigma}]
\end{align}
where
$f_{\sigma} \triangleq \sum_{i=1}^n i\cdot p_X(\sigma(i)) \rho_{\sigma(i)}$.
To simplify this expression consider its evaluation on $\overline{\sigma}$:
\begin{equation}
   f_{\overline{\sigma}} = (n+1)\widehat{\rho} - f_{\sigma} 
\end{equation}
where
$
\widehat{\rho}  \triangleq \sum_{i=1}^n \rho_{\sigma(i)}p_X(\sigma(i))
$ does not depend on $\sigma$.
Therefore, defining 
\begin{equation}
\begin{aligned}
        \Eps_{\sigma} &\triangleq 2f_{\sigma} - (n+1)\widehat{\rho}\\
        &= \sum_{i=1}^n (2t-n-1) p_X(\sigma(i))\cdot \rho_{\sigma(i)}.
\end{aligned}
\end{equation}
we have
\begin{equation}
\begin{aligned}
    \mathcal{G}(\Eps,E)  &= \dfrac{n+1}{2} + \dfrac{1}{2}\sum_{\sigma} \Tr[E_{\sigma}\cdot \Eps_{\sigma}]\\
    &= \dfrac{n+1}{2} - \dfrac{1}{2}\sum_{\sigma} \Tr[E_{\sigma}\cdot \Eps_{\overline{\sigma}}]\\
\end{aligned}
\end{equation}
since
$\Eps_{\overline{\sigma}} = -\Eps_{\sigma}$.

We need the following lemma.
\begin{lemma}
\label{lemma-trace}
    Let $A$ be an Hermitian positive matrix. If  $X$ and $Y$ are Hermitian matrices such that $X \leq Y$, then $\Tr[AX] \leq \Tr[AY]$.
\end{lemma}
\begin{proof}
Let $Y-X=\sum_{i=1}^d \lambda_i \ketbra{\phi_i}{\phi_i}$ be the spectral decomposition of $Y-X\geq 0$. Then
    \begin{align}
        \Tr[AY] - \Tr[AX] &= \Tr[A(Y-X)] \\
                          &= \Tr[A\sum_{i=1}^d \lambda_i \ketbra{\phi_i}{\phi_i}] \\
                          &= \sum_{i=1}^d \lambda_i \smash[b]{\Tr[\underbrace{\bra{\phi_i}A\ket{\phi_i}}_{\geq 0}] }
                          \geq 0 \hspace*{2cm}\IEEEQEDhere\notag
    \end{align}
\end{proof}

Applying Lemma~\ref{lemma-trace} to $|\Eps_{\overline{\sigma}}| \geq \Eps_{\overline{\sigma}}$ we obtain
\begin{equation}
   \mathcal{G}(\Eps,E)
    \geq \dfrac{n+1}{2} - \dfrac{1}{2}\sum_{\sigma} \Tr[E_{\sigma}\cdot |\Eps_{\overline{\sigma}}|]\\
\end{equation}
where we have used~Lemma~\ref{lemma-trace}.
Now, if there exists $\sigma^* \in {\cal S}_n$ such that $|\Eps_{\sigma^*}| \geq |\Eps_{\sigma}|$ for all $\sigma \in {\cal S}_n$,
then
\begin{equation}
\begin{aligned}
   \mathcal{G}(\Eps,E)
    &\geq \dfrac{n+1}{2} - \dfrac{1}{2}\sum_{\sigma} \Tr[E_{\sigma}\cdot |\Eps_{\sigma^*}|]\\
    &=\dfrac{n+1}{2} - \frac{1}{2} \Tr[| \Eps_{\sigma^*}|]
\end{aligned}
\end{equation}
since $\sum_\sigma E_\sigma=I$. Now
\begin{equation}
    - | \Eps_{\sigma^*}| = 
    \mathbbm{1}_{\Eps_{\sigma^*}<0}\Eps_{\sigma^*}-\mathbbm{1}_{\Eps_{\sigma^*}>0}\Eps_{\sigma^*}
    = E^{\sigma^*}_{\sigma^*} \Eps_{\sigma^*}
    +E^{\sigma^*}_{\overline{\sigma^*}} \Eps_{\overline{\sigma^*}}
\end{equation}
where $E^{\sigma^*}$ is the POVM defined by
\begin{equation}
E^{\sigma^*}_{\sigma} = 
\begin{cases} 
\mathbbm{1}_{\Eps_{\sigma}<0}+
 \frac{1}{2}\mathbbm{1}_{\Eps_{\sigma}=0} & \text{if } \sigma \in \{\sigma^*, \overline{\sigma^*}\}, \\
0 & \text{otherwise}.
\end{cases}
\end{equation}
Indeed, one easily checks that $E^{\sigma^*}_{\sigma}\geq0$, $\sum_\sigma E^{\sigma^*}_{\sigma} = E^{\sigma^*}_{\sigma^*}+E^{\sigma^*}_{\overline{\sigma^*}}=\mathbbm{1}_{\Eps_{\sigma^*}<0}+
 \frac{1}{2}\mathbbm{1}_{\Eps_{\sigma^*}=0} +
 \mathbbm{1}_{\Eps_{\overline{\sigma^*}}<0}+
 \frac{1}{2}\mathbbm{1}_{\Eps_{\overline{\sigma^*}}=0} 
 =\mathbbm{1}_{\Eps_{\sigma^*}>0} +
 \mathbbm{1}_{\Eps_{{\sigma^*}}<0}+
 \mathbbm{1}_{\Eps_{{\sigma^*}}=0} =I
$.
It follows that 
\begin{equation}
      \mathcal{G}(\Eps,E)
    \geq  \dfrac{n+1}{2} + \dfrac{1}{2}\sum_{\sigma} \Tr[E^{\sigma^*}_{\sigma}\cdot \Eps_{\sigma}]   
\end{equation}
Hence $   \mathcal{G}(X | \Sigma) $ is minimized for $E=E^{\sigma^*}$.
This explains why this particular POVM achieves the minimum guessing work Dall'Arno et al.'s result:
\begin{theorem}[{\cite[Theorem~1]{dall2023measurement}}]
\label{theorem dall arno}
    if $\exists \sigma^* \in {\cal S}_n$ s.t $\forall \sigma \in {\cal S}_n$ $|\Eps_{\sigma^*}| \geq |\Eps_{\sigma}|$ then
    \begin{equation}
        G(X|\Eps) =   \dfrac{n+1}{2} + \dfrac{1}{2}\sum_{\sigma} \Tr[E^{\sigma^*}_{\sigma}\cdot \Eps_{\sigma}]   
    \end{equation}
\end{theorem}
This closed form formula is only valid under the condition $|\Eps_{\sigma^*}| \geq |\Eps_{\sigma}|$. In particular, every ensemble made of uniformly distributed qubits satisfies this hypothesis~\cite{dall2023measurement}. 
However since the Loewner order $A\geq B$ is not total, the permutation $\sigma^*$ in Theorem \ref{theorem dall arno} does not always exist, and the problem of efficient computation of the quantum guesswork is still open for many classes of ensembles.

\section{Data Processing Inequalities} 
\label{Main Results}

\subsection{Unitary Invariance}
\label{Unitary invariance}
In the framework of the guessing game, if a unitary transformation is applied to the quantum ensemble, we expect the value of quantum guesswork to remain the same:    
\begin{theorem}
\label{U-invariance}
Let $U$ be a unitary matrix and define
    \begin{equation}
        U\Eps U^\dagger = \{U\rho_xU^\dagger, p_X(x)\}_{x \in X}.
    \end{equation}
Then 
    \begin{equation}
        G(X|\Eps) = G(X|U\Eps U^\dagger).
    \end{equation}
\end{theorem}

\begin{proof}
First observe that $\Tilde{E}_{y} \triangleq U^\dagger E_{y}U$ forms a POVM since 
$\sum_y \Tilde{E}_{y} = U^\dagger \sum_y E_y U = U^\dagger U = I$.
From the identity
\begin{equation}
\Tr[E_y\cdot U\rho_iU^\dagger]=\Tr[U^\dagger E_yU\cdot \rho_i],
\end{equation}
 it is easily seen that   
    \begin{equation}
        \mathcal{G}(U\Eps U^\dagger, E) 
        = \mathcal{G}(\Eps, U^\dagger EU).
    \end{equation}
Therefore, since the map $E \mapsto \tilde{E}=U^\dagger EU$ is one-to-one,
\begin{align}
        G(X|U\Eps U^\dagger) &= \inf_{E}\mathcal{G}(U\Eps U^\dagger, E) \\
                   &= \inf_{E}\mathcal{G}(\Eps, U^\dagger EU) \\
                   &= \inf_{\tilde{E}}\mathcal{G}(\Eps, \tilde{E}) \\ 
                   &= G(X|\Eps)\hspace*{5cm}\IEEEQEDhere\notag
\end{align}
\end{proof}

\subsection{Post Data Processing Inequality}
\label{Post DPI}

More generally, assume that a quantum channel is applied to the quantum ensemble.
We expect that guessing the label of the output of the quantum channel is more difficult than directly guessing the initial state. This is a generalization of the \emph{data processing inequality} for guesswork $G(X|Y)$ to the quantum case:
\begin{theorem}\label{thm-postdpi}
Let $\Ne$ be a quantum channel (a completely positive trace-preserving map) and define
\begin{equation}
\Ne(\Eps) = \{\Ne(\rho_x), p_X(x)\}_{x \in \X}.  
\end{equation}
Then
    \begin{equation}
        G(X|\Eps) \leq G(X|\Ne(\Eps))
    \end{equation}
\end{theorem}
\begin{proof}
Let 
    \begin{equation}
        \Ne(\rho_x) = \sum_{l=1}^d V_l\rho_xV_l^\dagger
    \end{equation}
be the operator-sum representation~\cite[Theorem~4.4.1]{wilde2013quantum} of $\Ne$, where 
$\sum_{l=1}^d V_l^\dagger V_l = I$.
Since 
\begin{equation}
\Tr[E_y\cdot \Ne(\rho_{x_i})] =  \Tr[E_y\cdot \sum_{l=1}^d V_l\rho_{x_i}V_l^\dagger]
=\Tr[\sum_{l=1}^d V_l^\dagger E_yV_l\cdot \rho_{x_i}],
\end{equation}
one has
    \begin{equation}
        \mathcal{G}(\Ne(\Eps), E) 
= \mathcal{G}(\Eps, \widetilde{\Ne}(E))
    \end{equation}
    where $\widetilde{\Ne}(E_y) = \sum_{l=1}^d V_l^\dagger E_yV_l\geq 0$ is a POVM since 
   \begin{equation}
        \sum_{y} \widetilde{\Ne}(E_y) = \sum_{l=1}^d V_l^\dagger {\sum_yE_y}       V_l
        =  \sum_{l=1}^d V_l^\dagger V_l = I.
    \end{equation}
Therefore,
    \begin{align}
        G(X|\Ne(\Eps)) &= \inf_{E} \mathcal{G}(\Ne(\Eps), E) 
                       = \inf_{E} \mathcal{G}(\Eps, \widetilde{\Ne}(E)) \\
                       &\geq \inf_{E} \mathcal{G}(\Eps, E) \\
                       &= G(X|\Eps) \hspace*{5.1cm}\IEEEQEDhere\notag
    \end{align} 
\end{proof}

\begin{remark}
Note that equality does not hold in general since $E \mapsto \widetilde{\Ne}(E)$ is not always one-to-one.
However, one recovers Theorem~\ref{U-invariance} by applying Theorem~\ref{thm-postdpi} twice, once for the channel $\rho\mapsto U\rho\, U^\dagger$ and once for the inverse channel $\rho\mapsto U^\dagger\rho\, U$.
\end{remark}

\subsection{Pre Data Processing Inequality}
\label{Pre DPI}

Here we use the concept of \emph{majorization}~\cite{majorization}.
We say that a random variable $Z$ majorizes $X$ if it distribution $p_Z$ majorizes $p_X$ and we write $X \preceq Z$.
\begin{theorem}
    $G(X|\Eps)$ is Schur-concave, i.e.,
    \begin{equation}
        X \preceq Z \implies G(X|\Eps) \geq G(Z|\Eps).
    \end{equation}
\end{theorem}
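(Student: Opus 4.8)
The plan is to deduce Schur-concavity from two structural facts about the guesswork, namely its \emph{concavity} in the underlying distribution and its \emph{invariance under relabeling}, and then to feed these into the doubly-stochastic characterization of majorization.

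First I would fix the states $\{\rho_x\}_{x \in \X}$ and view the guesswork as a function $g$ of the probability vector $p = (p_X(x_1), \dots, p_X(x_n))$ alone. Using the operational form $\hat{G}(\Eps, E) = \sum_{\sigma \in S_n} \Tr[E_\sigma f_\sigma]$ with $f_\sigma = \sum_{t=1}^n t\,\rho_{\sigma(t)}\,p_X(\sigma(t))$, one sees that for every \emph{fixed} POVM $E$ the map $p \mapsto \hat{G}(\Eps, E)$ is affine, since each $f_\sigma$ depends linearly on $p$ and the trace is linear. Consequently $g(p) = G(X|\Eps) = \inf_E \hat{G}(\Eps, E)$, being a pointwise infimum of affine functions, is concave on the probability simplex.

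Next I would invoke the Hardy--Littlewood--P\'olya theorem: $p_Z$ majorizes $p_X$ if and only if $p_X = D\,p_Z$ for some doubly-stochastic $D$, and by the Birkhoff--von Neumann theorem $D = \sum_{k} \lambda_k P_k$ is a convex combination of permutation matrices $P_k$ with $\lambda_k \geq 0$ and $\sum_k \lambda_k = 1$. Combining this with concavity gives
\begin{align}
G(X|\Eps) = g(p_X) = g\Bigl(\textstyle\sum_k \lambda_k P_k\, p_Z\Bigr) \geq \sum_k \lambda_k\, g(P_k\, p_Z).
\end{align}
It then remains to show that $g(P_k\, p_Z) = g(p_Z)$ for each $k$, after which $\sum_k \lambda_k\, g(p_Z) = g(p_Z) = G(Z|\Eps)$ closes the argument in the required direction $G(X|\Eps) \geq G(Z|\Eps)$. (Equivalently one could reduce to a single T-transform $P_X = (1-\theta) p_Z + \theta\, P_{ij} p_Z$ and argue coordinatewise, which isolates the same invariance.)

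The step I expect to be the main obstacle is precisely this permutation-invariance $g(P_k\, p) = g(p)$. In the classical problem there are no states attached to the symbols, so the guesswork is manifestly symmetric in $p$ and the identity is trivial. In the quantum setting a permutation of the \emph{probabilities} must be matched by the same permutation of the \emph{labels of the states} for the value to be unaffected, so the invariance has to be established as a relabeling invariance of the whole ensemble: the permutation acts simultaneously on the state indices and on the outcome set $S_n$ of the optimal POVM. I would therefore verify that the substitution $E_\sigma \mapsto E_{P_k^{-1}\sigma}$ maps admissible POVMs bijectively onto admissible POVMs while transporting $g(P_k\,p)$ exactly onto $g(p)$, so that the infimum defining $g$ is preserved. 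Tracking this joint relabeling of outcomes and state indices carefully is the delicate point on which the whole proof rests.
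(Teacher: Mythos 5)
Your route (concavity of $p \mapsto \inf_E \hat{G}$ as a pointwise infimum of affine maps, then Hardy--Littlewood--P\'olya and Birkhoff--von Neumann) is genuinely different from the paper's, which uses no decomposition at all: the paper fixes the POVM $E$, invokes the Schur-concavity of \emph{classical} conditional guesswork from a cited reference to get $G(Z|Y_E)\leq G(X|Y_E)$, and then takes the infimum over $E$. Your concavity step is correct, but the step you yourself flag as the crux --- the permutation invariance $g(P_k\,p)=g(p)$ with the states held fixed --- is \emph{false}, and it cannot be repaired by relabeling POVM outcomes. Concrete counterexample: take $n=2$, $\rho_1=\ketbra{0}{0}$, $\rho_2=I/2$. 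For two hypotheses the optimal guesswork is $g(p)=\tfrac{3}{2}-\tfrac{1}{2}\|p_1\rho_1-p_2\rho_2\|$ (Helstrom), where $\|\cdot\|$ is the trace norm. For $p=(0.9,0.1)$ one gets $\|0.9\rho_1-0.05\,I\|=0.85+0.05=0.9$, so $g(p)=1.05$; for the permuted vector $Pp=(0.1,0.9)$ one gets $\|0.1\rho_1-0.45\,I\|=0.35+0.45=0.8$, so $g(Pp)=1.1$. The reason your proposed fix fails is visible algebraically: in $\sum_t t\,\Tr[E_\sigma\rho_{\sigma(t)}]\,p_{\pi(\sigma(t))}$, the change of variable $\sigma'=\pi\sigma$ together with $E_\sigma\mapsto E_{\pi^{-1}\sigma'}$ realigns the probability index with $\sigma'(t)$ but drags the state index along to $\rho_{\pi^{-1}(\sigma'(t))}$, so you would additionally need $\rho_{\pi^{-1}(x)}=\rho_x$ for every $x$. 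Relabeling invariance holds only when states and probabilities are permuted \emph{together}, which is precisely not what your argument requires.

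The failure is structural, not a fixable technicality: since $p$ and $P_k\,p$ majorize each other, any Schur-concave function of the probability vector alone is automatically permutation-symmetric, and the example above shows that quantum guesswork with state labels pinned is not symmetric. Hence no argument that, like yours, treats $G(\cdot|\Eps)$ as a function of $p$ with $\{\rho_x\}$ attached to fixed labels can establish the claim; the statement is only meaningful when the relation between $Z$ and the emitted states is specified through their joint distribution --- as in the corollary $Z=f(X)$ actually used downstream, where for every measurement outcome the posterior of $f(X)$ majorizes the posterior of $X$, so the classical outcome-by-outcome averaging argument applies. That per-POVM reduction to a classical majorization statement is what the paper's proof does, and it is the reduction you would need to adopt; the Birkhoff decomposition by itself cannot get you there.
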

\begin{proof}
    By Schur-concavity of guessing entropy and conditional guessing entropy~\cite[Theorem~5]{beguinot2022my},~\cite{rioulinterplay}, we get
    \begin{align}
    X \preceq Z &\implies G(Z|Y_E) \leq G(X|Y_E)\\
                   &\implies \inf_{E}G(Z|Y_E) \leq \inf_{E}G(X|Y_E)\\
                   &\implies G(Z|\Eps) \leq G(X|\Eps).\hspace*{3.7cm}\IEEEQEDhere\notag
    \end{align} 
\end{proof}
Since $X$ majorizes $f(X)$~\cite{rioulinterplay}, we have the following
\begin{corollary}
    For a deterministic function $f$, we have
    \begin{equation}
        G(f(X)|\Eps) \leq G(X|\Eps)
    \end{equation}
\end{corollary}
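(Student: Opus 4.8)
The plan is to recognize the corollary as the special case $Z = f(X)$ of the preceding Pre DPI theorem. Since that theorem already gives $X \preceq Z \implies G(f(X)|\Eps) = G(Z|\Eps) \leq G(X|\Eps)$, the entire task reduces to a purely classical, measurement-independent fact: that the law of a deterministic image $f(X)$ majorizes the law of $X$, i.e. $X \preceq f(X)$. Indeed, the ensemble $\Eps$ plays no role here, and the inequality follows as soon as the majorization of the underlying distributions is established.

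First I would describe the distribution $p_{f(X)}$ explicitly. Writing $z$ for a value of $f$ and $f^{-1}(z) = \{x \in \X : f(x) = z\}$ for its fiber, the pushforward law is
\begin{align}
    p_{f(X)}(z) = \sum_{x \in f^{-1}(z)} p_X(x),
\end{align}
so that $p_{f(X)}$ is obtained from $p_X$ by \emph{aggregating} the masses lying in each fiber. Intuitively this concentrates probability and therefore moves the distribution up in the majorization order; the goal is to make this precise.

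The cleanest route I would take is to decompose the aggregation into elementary \emph{pairwise merges}: process the fibers one at a time, and within a fiber repeatedly replace two atoms of masses $a, b \geq 0$ by a single atom of mass $a+b$ together with a zero, keeping the dimension fixed by padding. It then suffices to show that a single merge majorizes, which I would do through the convex-function characterization of majorization ($p \preceq q$ iff $\sum_i \phi(p_i) \leq \sum_i \phi(q_i)$ for every convex $\phi$). For a merge this amounts to the inequality $\phi(a+b) + \phi(0) \geq \phi(a) + \phi(b)$, which is immediate from convexity since $a$ and $b$ are each convex combinations of $0$ and $a+b$. Transitivity of majorization then yields $X \preceq f(X)$, and invoking the Pre DPI theorem finishes the argument.

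The only genuine obstacle is this majorization step, and even it is standard; the mild care needed is the bookkeeping of comparing distributions of different support sizes, which the usual convention of zero-padding the shorter vector resolves. Everything downstream is inherited verbatim from the Schur-concavity theorem, so no new reasoning about POVMs, the ensemble, or the infimum over measurements is required.
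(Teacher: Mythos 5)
Your proposal is correct and follows exactly the paper's route: the paper's entire ``proof'' of this corollary is the one-line remark that it is the special case $Z=f(X)$ of the preceding Schur-concavity (Pre DPI) theorem, which is precisely your reduction. Your fiber-merging argument via the convex-function characterization of majorization (using $\phi(a+b)+\phi(0)\geq\phi(a)+\phi(b)$) correctly supplies the classical fact $X \preceq f(X)$ that the paper invokes only implicitly, so your write-up is, if anything, more complete than the original.
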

Guessing a function of $X$ (e.g., its least or most significant bit) is obviously easier than guessing $X$ directly.

\subsection{From Classical to Quantum Guessing vs. Entropy}
\label{Sharpening}
Closed-form solutions to the quantum guesswork problem are only known for some specific ensembles under certain conditions (geometrically uniform states \cite[Section~5]{chen2014minimum}, qubits under a uniform distribution \cite[Corollary~1]{dall2022guesswork}\cite[Section~B]{dall2023measurement}). When closed-form solutions are not available, it may be desirable to obtain tight upper and lower bounds in terms of the Shannon entropy of $X$ and quantities such as Holevo's information, which can be computed from the quantum ensemble $\Eps$.
In the following, we generalize the approach of \cite{chen2014minimum} to obtain lower and upper bounds from any type of classical inequalities between $H(X|Y)$ and $G(X|Y)$.

Let $X$ and $Y_E$ be two random variables defined as in~\S~\ref{sec:intro}. \emph{Accessible information}, denoted $I_{acc}(\Eps)$, is the maximum amount of information that Bob can extract from his received state $\rho_x$ about $X$, defined as
\begin{equation}
    I_{acc}(\Eps) \triangleq \sup_{E} I(X:Y_E).
\end{equation}
There is no known formula to analytically solve the optimization problem in the definition of accessible information, but we have the following tight upper \cite{Holevo1973StatisticalPI} and lower \cite{jozsa1994lower} bounds:
\begin{equation}\label{bounds accessible}
    \Lambda(\Eps) \leq I_{acc}(\Eps) \leq \chi(\Eps).
\end{equation}
Here $\chi(\Eps)$ (Holevo's information) and $\Lambda(\Eps)$ are defined by
\begin{align}
    \chi(\Eps) &= S\bigl(\sum _{x}p_X(x)\rho_{x}\bigr)-\sum_{x}p_X(x)S(\rho_{x}) \\
    \Lambda(\Eps) &= Q\bigl(\sum_{x}p_X(x)\rho_x\bigr) - \sum_{x}p_X(x)Q(\rho_x)
\end{align}
where $S$ denotes the Von Neumann entropy and $Q$ denotes the subentropy, a quantity introduced in~\cite{jozsa1994lower}:
\begin{align}
    S(\rho) &= -\sum_k \lambda_k\log(\lambda_k) \\
    Q(\rho) &= -\sum_k\prod_{l \neq k}\dfrac{\lambda_k}{\lambda_k-\lambda_l}\lambda_k\log({\lambda_k})
\end{align}
where the $\lambda_k$'s are the eigenvalues of $\rho$.




\begin{theorem}\label{thm:guessing-inequality}
Let $\phi$ and $\psi$ be two non-decreasing functions such that for Shannon's entropy and the classical guessing entropy, 
\begin{equation}
    \psi(H(X|Y)) \geq G(X|Y) \geq \phi( H(X|Y) ).
\end{equation}
Then  for the quantum guesswork, we have
\begin{equation}
    \psi(H(X)-\Lambda(\Eps)) \geq G(X|\Eps) \geq \phi( H(X)-\chi(\Eps) ).
\end{equation}
\end{theorem}

\begin{proof}
    For any POVM $E$, we have
    \begin{equation}\label{above}
        \psi(H(X|Y_E)) \geq G(X|Y_E) \geq \phi(H(X|Y_E)).
    \end{equation}
    Now, using the bounds in \eqref{bounds accessible}, 
    \begin{equation}
        H(X) - \Lambda(\Eps) \geq \inf_{E}H(X|Y_E) \geq H(X) - \chi(\Eps).
    \end{equation}
    Taking the infimum over $E$ in \eqref{above} we obtain
    \begin{equation}
        \psi(H(X) - \Lambda(\Eps)) \geq G(X|\Eps) \geq \phi(H(X) - \chi(\Eps)).
    \end{equation}
    since $\phi$ and $\psi$ are non-decreasing.
\end{proof}

Using the classical McEliece-Yu inequality~\cite{McEliece}
we recover the upper bound on the quantum guesswork obtained in~\cite[Thm~3]{chen2014minimum}. 
In this case McEliece-Yu inequality cannot be improved since it is achieved everywhere when the channel $X\to Y$ is the $n$-ary erasure channel.

Using the classical Massey inequality~\cite{massey1994guessing} we recover~\cite[Thm~2]{chen2014minimum}.
With Rioul's inequality~\cite[Eq.~13]{rioul2022variations}, we obtain the improved inequality
\begin{equation}
    G(X|\Eps) \geq
     \dfrac{2^{H(X)-\chi(\Eps)}}{e} +\dfrac{1}{2}.
\end{equation}
Since the above Theorem is generic it can also be used with other improved  bounds, e.g., those obtained in~\cite{DBLP:conf/isit/BeguinotR24}.

\section*{Acknowledgments}
The authors are thankful to Michele Dall’Arno for the numerous exchanged emails and his help in understanding the basic theory of quantum guesswork.


\bibliographystyle{IEEEtran} 
\bibliography{paper}

\begin{thebibliography}{10}
\providecommand{\url}[1]{#1}
\csname url@samestyle\endcsname
\providecommand{\newblock}{\relax}
\providecommand{\bibinfo}[2]{#2}
\providecommand{\BIBentrySTDinterwordspacing}{\spaceskip=0pt\relax}
\providecommand{\BIBentryALTinterwordstretchfactor}{4}
\providecommand{\BIBentryALTinterwordspacing}{\spaceskip=\fontdimen2\font plus
\BIBentryALTinterwordstretchfactor\fontdimen3\font minus \fontdimen4\font\relax}
\providecommand{\BIBforeignlanguage}[2]{{%
\expandafter\ifx\csname l@#1\endcsname\relax
\typeout{** WARNING: IEEEtran.bst: No hyphenation pattern has been}%
\typeout{** loaded for the language `#1'. Using the pattern for}%
\typeout{** the default language instead.}%
\else
\language=\csname l@#1\endcsname
\fi
#2}}
\providecommand{\BIBdecl}{\relax}
\BIBdecl

\bibitem{massey1994guessing}
J.~L. Massey, ``{Guessing and entropy},'' in \emph{Proceedings of 1994 IEEE International Symposium on Information Theory}.\hskip 1em plus 0.5em minus 0.4em\relax IEEE, 1994, p. 204.

\bibitem{arikan1996inequality}
E.~Arikan, ``{An inequality on guessing and its application to sequential decoding},'' \emph{IEEE Transactions on Information Theory}, vol.~42, no.~1, pp. 99--105, 1996.

\bibitem{chen2014minimum}
W.~Chen, Y.~Cao, H.~Wang, and Y.~Feng, ``{Minimum guesswork discrimination between quantum states},'' \emph{Quantum Information and Computation, Vol.15}, 2015.

\bibitem{helstrom1969quantum}
C.~W. Helstrom, ``{Quantum detection and estimation theory},'' \emph{Journal of Statistical Physics}, vol.~1, pp. 231--252, 1969.

\bibitem{chefles2000quantum}
A.~Chefles, ``{Quantum state discrimination},'' \emph{Contemporary Physics}, vol.~41, no.~6, pp. 401--424, 2000.

\bibitem{barnett2009quantum}
S.~M. Barnett and S.~Croke, ``{Quantum state discrimination},'' \emph{Advances in Optics and Photonics}, vol.~1, no.~2, pp. 238--278, 2009.

\bibitem{davies1978information}
E.~Davies, ``{Information and quantum measurement},'' \emph{IEEE Transactions on Information Theory}, vol.~24, no.~5, pp. 596--599, 1978.

\bibitem{eldar2001quantum}
Y.~C. Eldar and G.~D. Forney, ``{On quantum detection and the square-root measurement},'' \emph{IEEE Transactions on Information Theory}, vol.~47, no.~3, pp. 858--872, 2001.

\bibitem{barnett2001minimum}
S.~M. Barnett, ``{Minimum-error discrimination between multiply symmetric states},'' \emph{Physical Review A}, vol.~64, no.~3, p. 030303, 2001.

\bibitem{hanson2021guesswork}
E.~P. Hanson, V.~Katariya, N.~Datta, and M.~M. Wilde, ``{Guesswork with quantum side information},'' \emph{IEEE Transactions on Information Theory}, vol.~68, no.~1, pp. 322--338, 2021.

\bibitem{wilde2013quantum}
M.~M. Wilde, \emph{Quantum information theory}.\hskip 1em plus 0.5em minus 0.4em\relax Cambridge university press, 2013.

\bibitem{watrous2018theory}
J.~Watrous, \emph{{The theory of quantum information}}.\hskip 1em plus 0.5em minus 0.4em\relax Cambridge university press, 2018.

\bibitem{katariya2023experimental}
V.~Katariya, N.~Bhusal, and C.~You, ``{Experimental Guesswork with Quantum Side Information Using Twisted Light},'' \emph{Sensors}, vol.~23, no.~14, p. 6570, 2023.

\bibitem{dall2022guesswork}
M.~Dall'Arno, F.~Buscemi, and T.~Koshiba, ``{Guesswork of a quantum ensemble},'' \emph{IEEE Transactions on Information Theory}, vol.~68, no.~5, pp. 3139--3143, 2022.

\bibitem{avirmed2023adversarial}
B.~Avirmed, K.~Niinomi, and M.~Dall'Arno, ``{Adversarial guesswork with quantum side information},'' \emph{Quantum Information and Computation 23, 1105}, 2023.

\bibitem{dall2023measurement}
M.~Dall'Arno, ``{On the Measurement attaining the Quantum Guesswork},'' \emph{IEEE Transactions on Information Theory}, 2023.

\bibitem{dall2023computing}
M.~Dall'Arno, F.~Buscemi, and T.~Koshiba, ``{Computing the quantum guesswork: a quadratic assignment problem},'' \emph{Quantum Information and Computation}, vol.~23, no. 9\&10, pp. 0721--0732, 2023.

\bibitem{majorization}
A.~W. Marshall, I.~Olkin, and B.~Arnold, \emph{Inequalities: {Theory} of Majorization and Its Applications}, 2nd~ed., ser. Springer Series in Statistics.\hskip 1em plus 0.5em minus 0.4em\relax Springer, 2011.

\bibitem{beguinot2022my}
J.~B{\'e}guinot, W.~Cheng, S.~Guilley, and O.~Rioul, ``Be my guess: {Guessing} entropy vs. success rate for evaluating side-channel attacks of secure chips,'' in \emph{2022 25th Euromicro Conference on Digital System Design (DSD)}.\hskip 1em plus 0.5em minus 0.4em\relax IEEE, 2022, pp. 496--503.

\bibitem{rioulinterplay}
O.~Rioul, ``The interplay between error, total variation, alpha-entropy and guessing: {Fano} and {Pinsker} direct and reverse inequalities,'' \emph{Entropy}, vol.~25, no.~7, p. 978, 2023.

\bibitem{Holevo1973StatisticalPI}
\BIBentryALTinterwordspacing
A.~S. Holevo, ``Statistical problems in quantum physics,'' in \emph{2nd Japan-USSR Symposium on Probability Theory}, ser. Lecture Notes in Mathematics, G.~Maruyama and Y.~Prokhorov, Eds., vol. 330.\hskip 1em plus 0.5em minus 0.4em\relax Springer, 1973, pp. 104--119. [Online]. Available: \url{https://api.semanticscholar.org/CorpusID:124959629}
\BIBentrySTDinterwordspacing

\bibitem{jozsa1994lower}
R.~Jozsa, D.~Robb, and W.~K. Wootters, ``{Lower bound for accessible information in quantum mechanics},'' \emph{Physical Review A}, vol.~49, no.~2, p. 668, 1994.

\bibitem{McEliece}
R.~J. McEliece and Z.~Yu, ``An inequality on entropy,'' in \emph{Proc. IEEE International Symposium on Information Theory}, 1995, p. 329.

\bibitem{rioul2022variations}
O.~Rioul, ``{Variations on a Theme by Massey},'' \emph{IEEE Transactions on Information Theory}, vol.~68, no.~5, pp. 2813--2828, 2022.

\bibitem{DBLP:conf/isit/BeguinotR24}
\BIBentryALTinterwordspacing
J.~B{\'{e}}guinot and O.~Rioul, ``What can information guess? {Guessing} advantage vs. {R\'enyi} entropy for small leakages,'' in \emph{{IEEE} International Symposium on Information Theory, {ISIT} 2024, Athens, Greece, July 7-12, 2024}.\hskip 1em plus 0.5em minus 0.4em\relax {IEEE}, 2024, pp. 2963--2968. [Online]. Available: \url{https://doi.org/10.1109/ISIT57864.2024.10619150}
\BIBentrySTDinterwordspacing

\end{thebibliography}

\end{document}